\newcommand{\defo}[1]{\emph{#1}}
\newcommand{\abovebelow}[2]{(^{#1}_{#2})}
\newcommand{\rank}{r}
\newcommand{\inv}{s}
\newcommand{\ranki}{\mathring{r}}
\newcommand{\invi}{\mathring{s}}
\newcommand{\tr}[1]{{#1}\!\strut^\intercal\!}
\newcommand{\BMS}{\text{\tiny\textsc{BMS}}}
\definecolor{rkBlue}{RGB}{50,20,230}
\definecolor{siGreen}{RGB}{0,103,0}
\title{Linear Ranking for Linear Lasso Programs\thanks{
The final publication is available at \href{http://link.springer.com/chapter/10.1007\%2F978-3-319-02444-8_26}{link.springer.com}.
} \thanks{
  This work is supported by the
  German Research Council (DFG) as part of the Transregional Collaborative
  Research Center ``Automatic Verification and Analysis of Complex Systems''
  (SFB/TR14 AVACS)
}
}
\author{Matthias Heizmann \and Jochen Hoenicke\and Jan Leike \and Andreas Podelski}
\institute{University of Freiburg, Germany}
\begin{document}

\maketitle              

\sloppy

\begin{abstract}
The general setting of this work is the constraint-based synthesis of termination arguments.
We consider a restricted class of programs called lasso programs. The termination argument for a lasso program is a pair of a ranking function and an invariant.
We present the---to the best of our knowledge---first method to synthesize termination arguments for lasso programs that uses  linear arithmetic.
We prove a completeness theorem.
The completeness theorem establishes that, even though we use only linear (as opposed to non-linear) constraint solving, we are able to compute termination arguments in several interesting cases.
The key to our method lies in a constraint transformation that replaces a disjunction by a sum.
\end{abstract}

\section{Introduction}
Termination is arguably the single most interesting correctness property of a program.
Research on proving termination can be divided according to three (interrelated) topics, namely:
practical tools~\cite{DBLP:journals/jar/AlbertAGP11,cav/BrockschmidtMOG12,cav/CookPR06,sas/HarrisLNR10,atva/KroeningSTTW08,cav/KroeningSTW10,lics/PodelskiR04,hybrid/PodelskiW07},
\mbox{decidability} questions~\cite{vmcai/Ben-AmramGM12,cav/Braverman06,cav/Tiwari04}, 
and
constraint-based synthesis of
termination \mbox{arguments}
~\cite{DBLP:journals/iandc/BagnaraMPZ12,DBLP:journals/corr/abs-1208-4041,cav/BradleyMS05,icalp/BradleyMS05,concur/BradleyMS05,tacas/ColonS01,journals/fmsd/CookKRW13,vmcai/Cousot05,vmcai/PodelskiR04,cav/Rybalchenko10}.  The work in this paper falls under the research on the third topic.
The general goal of this research is to investigate how one can derive a constraint from the program text and compute a termination argument (of a restricted form) through the solution of the constraint, i.e., via constraint solving.

In this paper, we present a method for the synthesis of termination arguments for a specific class of programs that we call \emph{lasso programs}. 
As the name indicates, the control flow graph of a lasso program is of a restricted shape: a \emph{stem} followed by a \emph{loop}.

Lasso programs do not appear as stand-alone programs.
Lasso programs appear in practice whenever one needs a finite representation of an infinite path in a control flow graph, for example in (potentially spurious) counterexamples in a termination analysis\cite{cav/CookPR06,sas/HarrisLNR10,atva/KroeningSTTW08,cav/KroeningSTW10},  non-termination analysis\cite{conf/popl/GuptaHMRX08}, stability analysis\cite{vmcai/CookFKP11,hybrid/PodelskiW07}, or cost analysis\cite{DBLP:journals/jar/AlbertAGP11,conf/pldi/GulwaniZ10}.

Importantly, the termination argument for a lasso program is a pair of a ranking function and an invariant (the rank must decrease only for states that satisfy the invariant). \autoref{fig-bangalore} shows an example of a lasso program.

The class of lasso programs lies between two classes of programs for which constraint-based methods have been studied extensively. 
For the first, more specialized class, methods can be based on linear arithmetic constraint solving~\cite{DBLP:journals/iandc/BagnaraMPZ12,DBLP:journals/corr/abs-1208-4041,tacas/ColonS01,journals/fmsd/CookKRW13,vmcai/PodelskiR04}. 
For the second, more general class, all known methods are based on non-linear arithmetic constraint solving~\cite{cav/BradleyMS05,concur/BradleyMS05}. 
The contribution of our method can be phrased, alternatively, as the generalization of the applicability of the `linear methods', or as the
optimization of the `non-linear method' to a `linear method' for a subproblem.
The step from `non-linear' to `linear' is interesting for principled reasons (non-linear arithmetic constraint solving is undecidable in the case of integers).
As we will show the step is also practically interesting.

The reader may wonder how practical tools presently handle the situation where one needs to compute  termination arguments for lasso programs. 
One possibility is to resort to heuristics.
  For example, instead of computing a termination argument for the lasso program in Figure 1, one would compute the ranking function $f(x)=x$ for the program \verb|while(x>=0){x:=x-23;}|.

The key to our method is a constraint transformation that replaces a disjunction by a sum. We apply the `or-to-plus' transformation in the context of Farkas' Lemma.  
Following~\cite{DBLP:journals/iandc/BagnaraMPZ12,cav/BradleyMS05,tacas/ColonS01,journals/fmsd/CookKRW13,vmcai/PodelskiR04}, we apply Farkas' Lemma in order to eliminate the universal quantifiers in the arithmetic constraint whose solution is the termination argument.
If we apply Farkas' Lemma to the constraint \emph{after} the `or-to-plus' transformation, we obtain a \emph{linear} arithmetic constraint.

The effect of the `or-to-plus' transformation to the constraint is a restriction of its solution space.  The restriction seems strong; i.e., in some cases, the solution space becomes empty.
We can characterize those cases. In other words, we can characterize when the `or-to-plus' transformation leads to the loss of an termination argument, and when it does not.
The characterization is formulated as a completeness theorem for which we will present the proof.
This characterization allows us to establish that, even though we use only linear (as opposed to non-linear) constraint solving, we are able to compute termination arguments in several interesting cases.
A possible explanation for this (perhaps initially surprising) fact is that, for synthesis, we are interested in the mere existence of a solution, and the loss of \emph{many} solutions does not necessarily mean the loss of \emph{all} solutions of the constraint.

\begin{figure}[t]
\begin{center}
\begin{minipage}{4cm}

\begin{verbatim}
1: y := 23;
2: while( x >= 0 ) {
3:     x := x - y;
4:     y := y + 1;
5: }
\end{verbatim}

\end{minipage}
\begin{minipage}{6cm}
\begin{tikzpicture}[auto,
trans/.style={->,>=stealth,thick}]
 \node (0) at (0,0) [circle,draw,inner sep=1.5] {\scriptsize \texttt{1}};
 \node (1) at (2,0) [circle,draw,inner sep=1.7] {\scriptsize \texttt{2}};
\draw [trans] (0) to node[] {$\begin{array}{c}\tau_\mathsf{stem}:\\y' = 23 \end{array}$} (1);
\draw [trans,rounded corners=10mm, pos=0.1] (1) -- (3.3,1) -- node {$\tau_\mathsf{loop}:\;\;\begin{array}{rl}& x \geq 0\\ \land & x'=x-y\\ \land & y'=y+1\end{array}$} (3,-1) --  (1);

\end{tikzpicture}
\end{minipage}
\end{center}
\caption{Example of a lasso program and its formal representation $P_\mathsf{yPositive}=(\tau_\mathsf{stem},\tau_\mathsf{loop})$.
The ranking function defined by ${\color{rkBlue}f(x,y)=x}$ decreases in transitions from states that satisfy the invariant ${\color{siGreen}y\geq 1}$ (the ranking function does not decrease when $y\leq 0$).
}
\label{fig-bangalore}
\end{figure}
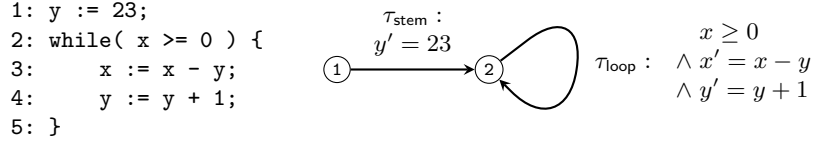

We have implemented our method and we have used our implementation to illustrate the applicability and the efficiency of our method. Our implementation is available through a web interface, together with a number of example programs (including the ones used in this paper).\footnote{
\url{http://ultimate.informatik.uni-freiburg.de/LassoRanker}
}

\section{Preliminaries: Linear Arithmetic}
We use $\vec x$ to denote the vector with entries $x_1,\dots,x_n$, and $\tr{\vec x}$ to denote the transposed vector of $\vec x$.
As usual, the expression $A\cdot\vec x \leq \vec b$ denotes the conjunction of linear constraints $\bigwedge\limits_{j=0}^m ( \sum\limits_{i=0}^n a_{ij}\cdot x_i)\leq b_j$.

We call a relation $\tau(\vec x,\vec x')$ a \defo{linear relation} if $\tau$ is defined by a conjunction of linear constraints over the variables $\vec x$ and $\vec x'$, i.e., if there is a matrix $A$ with $m$ rows and $2n$ columns and a vector $\vec b$ of size $m$ such that the following equation holds.
$$\tau(\vec x,\vec x')=\{(\vec x,\vec x')\mid A\cdot\abovebelow{\vec x}{\vec x'}\leq \vec b\}$$

We call a function $f(\vec x)$ an \defo{(affine) linear function}, if $f(\vec x)$ is defined by an affine linear term, i.e., there is a vector $\tr{\vec r}$ and a number $r_0$ such that the following equation holds.
$$f(\vec x) = \tr{\vec r}\cdot \vec x + r_0.$$

We call a predicate $I(\vec x)$ a \defo{linear predicate}, if $I(\vec x)$ is defined by a linear inequality, i.e., there is a vector $\tr{\vec s}$ and a number $s_0$ such that following equivalence holds.
$$I(\vec x)=\{\vec x\mid \tr{\vec s}\cdot \vec x+s_0\geq 0\}.$$

\paragraph{Farkas' Lemma.} We use the affine version of Farkas' Lemma~\cite{Schrijver:1986:TLI:17634} which is also used in ~\cite{DBLP:journals/iandc/BagnaraMPZ12,cav/BradleyMS05,journals/fmsd/CookKRW13,cav/Rybalchenko10,vmcai/PodelskiR04} and states the following.
Given
\begin{itemize}
 \item a satisfiable conjunction of linear constraints $A\cdot \vec x\leq \vec b$
 \item and a linear constraint $\tr{\vec c}\cdot \vec x \leq \delta$,
\end{itemize}
the following equivalence holds.
\begin{center}
$\forall \vec x\;\;  (A\cdot \vec x\leq \vec b \rightarrow \tr{\vec c}\cdot \vec x \leq \delta)$ \ \ \ iff \ \ \  $\exists \vec\lambda\;\;(\vec\lambda\geq 0 \land \tr{\vec\lambda}\cdot A = \tr{\vec c} \land \tr{\vec\lambda}\cdot \vec b\leq \delta)$
\end{center}

\section{Lasso Program}
To abstract away from program syntax, we define a lasso program directly by the two relations that generate its execution sequences.

\begin{definition}[Lasso Program] Given a set of states $\Sigma$, a \defo{lasso program} 
$$P=(\tau_\mathsf{stem},\tau_\mathsf{loop})$$
is given by the two relations $\tau_\mathsf{stem}\subseteq \Sigma\times\Sigma$ and $\tau_\mathsf{loop}\subseteq \Sigma\times\Sigma$.
We call $\tau_\mathsf{stem}$ the \defo{stem} of $P$ and $\tau_\mathsf{loop}$ the \defo{loop} of $P$.

An \defo{execution of the lasso program} $P$ is a possibly infinite sequence of states $\sigma_0, \sigma_1, \ldots$ such that
\begin{itemize}
 \item the pair of the first two states is an element of the stem, i.e.,
 $$ (\sigma_0, \sigma_1)\in \tau_{stem}$$
 \item and each other consecutive pair of states is an element of the loop, i.e.,
 $$ (\sigma_i, \sigma_{i+1})\in \tau_{loop} \qquad \text{ for } i=1,2,\dots$$
\end{itemize}
We call the lasso program $P$ \defo{terminating} if $P$ has no infinite execution.

\end{definition}
We use constraints over primed and unprimed variables to denote a transition relation (see~\autoref{fig-bangalore}).

In order to avoid cumbersome technicalities, we consider only lasso programs that have an execution that contains at least three states. This means we consider only programs where the relational composition of $\tau_\mathsf{stem}$ and $\tau_\mathsf{loop}$ is non-empty, i.e.,
$$\tau_\mathsf{stem} \circ \tau_\mathsf{loop}\neq\emptyset.$$

Since Turing, a termination argument is based on an ordering which does not allow infinite decreasing chains (such as ordering on the natural numbers).  Here, we use the ordering over the set of positive reals which is defined by some value $\delta>0$, namely

\begin{center}
\hfil\hfil\hfil $a\prec_\delta b$ \ \ \ \  iff \ \ \ \  $a\geq 0 \;\land\; a-b\geq\delta$ \hfil\hfil\hfil  $a,b\in\mathbb{R}.$
\end{center}

\paragraph{Ranking Function.}
We call a function $f$ from the states of the lasso program $P$ into
the reals $\mathbb{R}$ a \defo{ranking function} for $P$ if there is a
positive number $\delta>0$ such that for each consecutive pair of
states $(\vec x_i, \vec x_{i+1})$ of a loop transition ($i \geq 1$)
in every execution of $P$
\begin{itemize}
\item the value of $f$ is decreasing by at least $\delta$, i.e.,
\begin{align*}
f(\vec x_i) - f(\vec x_{i+1})\geq\delta,
\end{align*}
\item and the value of $f$ is non-negative, i.e.,
\begin{align*}
f(\vec x_i)\geq 0.
\end{align*}
\end{itemize}
If there is a ranking function for the lasso program $P$, then $P$ is terminating.

\paragraph{Inductive Invariant.}
We call a state predicate ${\color{siGreen} I(\vec x)}$ an \defo{inductive invariant} of the lasso program $P$ if

\begin{itemize}
\item the predicate holds after executing the stem, i.e.,
\begin{displaymath}
\forall \vec x\;\forall \vec x'\quad \tau_\mathsf{stem}(\vec x,\vec x') \rightarrow {\color{siGreen} I(\vec x')}, \tag{$\varphi_\mathsf{invStem}$}\label{inv-stem1}
\end{displaymath}

\item and if the predicate holds before executing the loop, then the predicate holds afterwards, i.e.,
\[
\forall \vec x\;\forall \vec x'\quad{\color{siGreen} I(\vec x)} \;\land\;\tau_\mathsf{loop}(\vec x,\vec x') \rightarrow {\color{siGreen} I(\vec x')}.\tag{$\varphi_\mathsf{invLoop}$}\label{inv-loop1}
\]
\end{itemize}

\paragraph{Ranking Function with Supporting Invariant.}
We call a pair of a ranking function ${\color{rkBlue}f(\vec x)}$ and an inductive invariant ${\color{siGreen} I(\vec x)}$ of the lasso program $P$ a \emph{ranking function with supporting invariant} if the following holds.
\begin{itemize}
\item There exists a positive real number $\delta>0$ such that, if the inductive invariant holds then an execution of the loop decreases the value of the ranking function by at least $\delta$, i.e.,
\[
\exists\delta>0\forall \vec x\;\forall \vec x'\quad{\color{siGreen} I(\vec x)} \;\land\; \tau_\mathsf{loop}(\vec x,\vec x') \rightarrow {\color{rkBlue}f(\vec x)-f(\vec x')}\geq \delta.\tag{$\varphi_\mathsf{rkDecr}$}\label{rk-decr1}
\]
\item In states in which the inductive invariant holds and the loop can be executed, the value of the ranking function is non-negative, i.e., 
\[
\forall \vec x\;\forall \vec x'\quad{\color{siGreen} I(\vec x)} \;\land\; \tau_\mathsf{loop}(\vec x,\vec x') \rightarrow {\color{rkBlue}f(\vec x)\geq 0}.\tag{$\varphi_\mathsf{rkBound}$}\label{rk-bound1}
\]
\end{itemize}

For example, the lasso program depicted in \autoref{fig-bangalore} has the ranking function  ${\color{rkBlue}f(x,y)=x}$ with supporting invariant ${\color{siGreen} y\geq 1}$.

\paragraph{Linear lasso programs.}
Linear lasso programs. For the remainder of this paper we consider only linear lasso programs, linear ranking functions, and linear inductive invariants which we will define next.  The variables of the programs will range over the reals until we come to Section 9 where we turn to programs over integers.

\begin{definition}[Linear Lasso Program]
A \defo{linear lasso program} $$P=(\tau_\mathsf{stem},\tau_\mathsf{loop})$$ is a lasso program whose states are vectors over the reals, i.e. $\Sigma=\mathbb{R}^n$, and whose relations $\tau_\mathsf{stem}$ and $\tau_\mathsf{loop}$ are linear relations.
\end{definition}
We use the expression $A_\mathsf{stem}\cdot(^{\vec x}_{\vec x'})\leq \vec b_\mathsf{stem}$ to denote the relation $\tau_\mathsf{stem}$ of $P$.
We use the expression $A_\mathsf{loop}\cdot(^{\vec x}_{\vec x'})\leq \vec b_\mathsf{loop}$ to denote the relation $\tau_\mathsf{loop}$ of $P$.

\paragraph{Linear Ranking Function.}
If a ranking function $f:\mathbb{R}^n\rightarrow\mathbb{R}$ is an (affine) linear function, we call $f$ a \defo{linear ranking function}.
We use $r_1,\dots,r_n$ as coefficients of a linear ranking function, $\vec r$ as their vector,
$$f: \mathbb{R}^n \rightarrow \mathbb{R} \qquad {\color{rkBlue}f(\vec x)}={\color{rkBlue}\tr{\vec r} \cdot \vec x + r_0}.$$

\paragraph{Linear Invariant.}
If an inductive invariant $I(\vec x)$ is a linear predicate, we call $I$ a \defo{linear inductive invariant}. 
We use $s_1,\dots,s_n$ as coefficients of the term that defines the linear predicate, $\vec s$ as their vector,
$$ {\color{siGreen}I(\vec x)}\;\equiv\; {\color{siGreen}\tr{\vec s} \cdot \vec x + s_0\geq 0}.$$

\section[The Or-to-Plus Method]{The Or-to-Plus Method}

Our constraint-based method for the synthesis of linear ranking functions for linear lasso programs consists of three main steps:
\begin{description}
 \item[Step 1.] Set up four (universally quantified) constraints whose free variables are the coefficients of a linear ranking function with linear supporting invariant.
 \item[Step 2.] Apply Farkas' Lemma to the four constraints to obtain equivalent constraints without universal quantification.
 \item[Step 3.] Obtain solutions for the free variables by linear constraint solving.
\end{description}

The particularity of  our four constraints in Step~1
is that the application of Farkas' Lemma in Step~2 yields constraints that are linear.

Instead of presenting our constraints immediately, we derive them in three successive transformations of constraints.
We start with the four constraints \eqref{inv-stem1}, \eqref{inv-loop1}, \eqref{rk-decr1}, and \eqref{rk-bound1}.
Below, we have rephrased the four constraints for the setting where the ranking function is linear and the supporting invariant is linear.
We marked them \eqref{inv-stem2}, \eqref{inv-loop2}, \eqref{rk-decr2}, and \eqref{rk-bound2} in reference to Bradley, Manna and Sipma [5] who were the first to use them in the corresponding step of their method.

\subsubsection{The Bradley--Manna--Sipma constraints}\noindent

\noindent{\scriptsize for the special case of lasso programs and one linear supporting invariant}\footnote{
In~\cite{cav/BradleyMS05} the authors use more general general constraints that can be used to synthesize lexicographic linear ranking functions together with a conjunction of linear supporting invariants for programs that can also contains disjunctions.}
\begin{align}
\forall \vec x\;\forall \vec x'\qquad\qquad\qquad\qquad \tau_\mathsf{stem}(\vec x, \vec x') & \rightarrow {\color{siGreen}\tr{\vec\inv}\cdot\vec x' + \inv_0 \geq 0}\tag{$\varphi^\mathsf{\BMS}_1$}\label{inv-stem2}\\
\forall \vec x\;\forall \vec x'\quad {\color{siGreen}\tr{\vec\inv}\cdot\vec x + \inv_0 \geq 0} \;\land\; \tau_\mathsf{loop}(\vec x, \vec x') & \rightarrow {\color{siGreen}\tr{\vec\inv}\cdot\vec x' + \inv_0 \geq 0}\tag{$\varphi^\mathsf{\BMS}_2$}\label{inv-loop2}\\
\exists\delta>0\;\forall \vec x\;\forall \vec x'\quad {\color{siGreen}\tr{\vec\inv}\cdot\vec x + \inv_0 \geq 0} \;\land\; \tau_\mathsf{loop}(\vec x, \vec x') & \rightarrow {\color{rkBlue}\tr{\vec\rank}\cdot\vec x}-{\color{rkBlue}\tr{\vec\rank}\cdot\vec x'}\geq \delta\tag{$\varphi^\mathsf{\BMS}_3$}\label{rk-decr2} \\
\forall \vec x\;\forall \vec x'\quad {\color{siGreen}\tr{\vec\inv}\cdot\vec x + \inv_0 \geq 0} \;\land\; \tau_\mathsf{loop}(\vec x, \vec x') & \rightarrow {\color{rkBlue}\tr{\vec\rank}\cdot\vec x+\rank_0\geq 0} \tag{$\varphi^\mathsf{\BMS}_4$}\label{rk-bound2}
\end{align}
The free variables of
$\varphi^\mathsf{\BMS}_1\land\varphi^\mathsf{\BMS}_2\land\varphi^\mathsf{\BMS}_3\land\varphi^\mathsf{\BMS}_4$ are ${\color{rkBlue}\vec\rank}$, ${\color{rkBlue}\rank_0}$, ${\color{siGreen}\vec\inv}$, and ${\color{siGreen}\inv_0}$.

\subsubsection{Transformation 1: Move supporting invariant to right-hand side.}
We bring the conjunct
${\color{siGreen}\tr{\vec\inv}\cdot\vec x + \inv_0 \geq 0}$ 
in three of the four constraints \eqref{inv-stem2}, \eqref{inv-loop2}, \eqref{rk-decr2}, and \eqref{rk-bound2}
to the right-hand side of the implication, according to the following scheme.
$$\phi_1\land\phi_2\rightarrow\psi \;\;\equiv\;\; \phi_2\rightarrow\psi\lor\neg\phi_1$$
We obtain the following constraints.
\begin{align}
\forall \vec x\;\forall \vec x'\quad\tau_\mathsf{stem}(\vec x, \vec x') & \rightarrow {\color{siGreen}\tr{\vec\inv}\cdot\vec x' + \inv_0 \geq 0}\tag{$\psi_1$}\label{inv-stem3}\\
\forall \vec x\;\forall \vec x'\quad \tau_\mathsf{loop}(\vec x, \vec x') & \rightarrow  {\color{siGreen}\tr{\vec\inv}\cdot\vec x' + \inv_0 \geq 0}  \;\lor\; {\color{siGreen}-\tr{\vec\inv}\cdot\vec x - \inv_0 > 0}\tag{$\psi_2$}\label{inv-loop3} \\
\exists\delta>0\;\forall \vec x\;\forall \vec x'\quad \tau_\mathsf{loop}(\vec x, \vec x') & \rightarrow  {\color{rkBlue}\tr{\vec\rank}\cdot\vec x}-{\color{rkBlue}\tr{\vec\rank}\cdot\vec x'}\geq \delta \;\lor\; {\color{siGreen}-\tr{\vec\inv}\cdot\vec x - \inv_0 > 0} \tag{$\psi_3$}\label{rk-decr3}\\
\forall \vec x\;\forall \vec x'\quad \tau_\mathsf{loop}(\vec x, \vec x') & \rightarrow  {\color{rkBlue}\tr{\vec\rank}\cdot\vec x+\rank_0\geq 0} \;\lor\; {\color{siGreen}-\tr{\vec\inv}\cdot\vec x - \inv_0 > 0} \tag{$\psi_4$}\label{rk-bound3}
\end{align}

\subsubsection{Transformation 2: Drop supporting invariant in fourth constraint.}
We strengthen the fourth constraint \eqref{rk-bound3} by removing the disjunct $ {\color{siGreen}-\tr{\vec\inv}\cdot\vec x - \inv_0 > 0}$.
A solution for the strengthened constraint defines a ranking function
whose value is bounded from below for all states (and not just those that
satisfy the supporting invariant).

\subsubsection[Step 3: Replace disjunction by sum.]{Transformation 3: Replace disjunction by sum.}
We replace the disjunction on the right-hand side of the implication in constraints \eqref{inv-loop3} and \eqref{rk-decr3} by a single inequality, according to the scheme below. (It is the disjunction which prevents us from applying Farkas' Lemma to the constraints \eqref{inv-loop3} and \eqref{rk-decr3}.)
\begin{align*}
m\geq 0 \lor n> 0 \qquad \rightsquigarrow \qquad m+n\geq 0
\end{align*}
In the second constraint~\eqref{inv-loop3}, we replace the disjunction 
$${\color{siGreen}-\tr{\vec\inv}\cdot\vec x - \inv_0 > 0} \;\lor\; {\color{siGreen}\tr{\vec\inv}\cdot\vec x' + \inv_0 \geq 0}$$
by the inequality 
$${\color{siGreen}\tr{\vec\inv}\cdot\vec x' + \inv_0}{\color{siGreen}\;-\;\tr{\vec\inv}\cdot\vec x - \inv_0}\geq 0.$$
In the third constraint~\eqref{rk-decr3}, we replace the disjunction 
$${\color{siGreen}-\tr{\vec\inv}\cdot\vec x - \inv_0 > 0} \;\lor\; {\color{rkBlue}\tr{\vec\rank}\cdot\vec x}-{\color{rkBlue}\tr{\vec\rank}\cdot\vec x'}\geq\delta$$
by the inequality
$${\color{rkBlue}\tr{\vec\rank}\cdot\vec x}-{\color{rkBlue}\tr{\vec\rank}\cdot\vec x'} {\color{siGreen}-\tr{\vec\inv}\cdot\vec x - \inv_0}\geq\delta.$$
We obtain the following four constraints.
\subsubsection{The Or-to-Plus constraints}\noindent
\begin{align}
\forall \vec x\;\forall \vec x'\quad \tau_\mathsf{stem}(\vec x, \vec x') & \rightarrow {\color{siGreen}\tr{\vec\inv}\cdot\vec x' + \inv_0 \geq 0}\tag{$\varphi_1$}\label{inv-stem4}\\
\forall \vec x\;\forall \vec x'\quad \tau_\mathsf{loop}(\vec x, \vec x') & \rightarrow {\color{siGreen}\tr{\vec\inv}\cdot\vec x' + \inv_0}{\color{siGreen}\;-\;\tr{\vec\inv}\cdot\vec x - \inv_0}  \geq 0\tag{$\varphi_2$}\label{inv-loop4}\\
\exists\delta>0\;\forall \vec x\;\forall \vec x'\quad \tau_\mathsf{loop}(\vec x, \vec x') & \rightarrow {\color{rkBlue}\tr{\vec\rank}\cdot\vec x}-{\color{rkBlue}\tr{\vec\rank}\cdot\vec x'} {\color{siGreen}-\tr{\vec\inv}\cdot\vec x - \inv_0} \geq \delta\tag{$\varphi_3$}\label{rk-decr4}\\
\forall \vec x\;\forall \vec x'\quad \tau_\mathsf{loop}(\vec x, \vec x') & \rightarrow {\color{rkBlue}\tr{\vec\rank}\cdot\vec x+\rank_0\geq 0}\tag{$\varphi_4$}\label{rk-bound4}
\end{align}
The free variables of the conjunction $\varphi_1\land\varphi_2\land\varphi_3\land\varphi_4$ are ${\color{rkBlue}\vec\rank}$, ${\color{rkBlue}\rank_0}$, ${\color{siGreen}\vec\inv}$, and ${\color{siGreen}\inv_0}$.
\bigskip

Since we consider linear lasso programs, the relations $\tau_\mathsf{stem}$ and $\tau_\mathsf{loop}$ are given as conjunctions of linear constraints.
\begin{align*}
\tau_\mathsf{stem}(\vec x, \vec x') & \quad\equiv\quad A_\mathsf{stem}\cdot(^{\vec x}_{\vec x'})\leq \vec b_\mathsf{stem}\\
\tau_\mathsf{loop}(\vec x, \vec x') & \quad\equiv\quad A_\mathsf{loop}\cdot(^{\vec x}_{\vec x'})\leq \vec b_\mathsf{loop}
\end{align*}

\bigskip

We have now finished the description for the three transformation steps that lead us to the the or-to-plus constraints. We are now ready to introduce our method.

\medskip

\begin{center}
\fbox{
 \begin{minipage}{11cm}
 \textbf{The Or-to-Plus Method}

  \begin{description}
  \item[Input:] linear lasso program $P$.
  \item[Output:] coefficients ${\color{rkBlue}\vec\rank}$, ${\color{rkBlue}\rank_0}$, ${\color{siGreen}\vec \inv}$, and ${\color{siGreen}\inv_0}$ of a linear ranking function with linear supporting invariant
 \end{description}

 \begin{enumerate}
  \item Set up the or-to-Plus constraints \ref{inv-stem4},  \ref{inv-loop4}, \ref{rk-decr4}, and \ref{rk-bound4} for $P$.
  \item Apply Farkas' Lemma to each constraint.
  \item Obtain ${\color{rkBlue}\vec\rank}$, ${\color{rkBlue}\rank_0}$, ${\color{siGreen}\vec \inv}$, and ${\color{siGreen}\inv_0}$,  by linear constraint solving.
 \end{enumerate}
 \end{minipage}
}
\end{center}

\medskip

After setting up the four or-to-plus constraints $\varphi_1$, $\varphi_2$, $\varphi_3$, $\varphi_4$ in Step~1, we apply Farkas' Lemma to each of the four constraints in Step~2.
We obtain four linear constraints.
E.g., by applying Farkas' Lemma to the constraint~\eqref{rk-decr4} we obtain the following linear constraint.
\begin{align*}
\exists\delta>0\quad\exists \vec\lambda \quad \vec\lambda\geq 0 \quad\land\quad \tr{\vec\lambda}\cdot A_\mathsf{loop} = {\color{rkBlue}\tr{\abovebelow{{\color{siGreen}\vec s} -\vec\rank}{\phantom{-}\vec\rank}}} \quad\land\quad \tr{\vec\lambda} \cdot \vec b_\mathsf{loop}\leq -\delta - {\color{siGreen} s_0}
\end{align*}

We apply linear constraint solving in Step~3. We obtain a satisfying assignment for the free variables in the resulting constraints.
The values obtained for ${\color{rkBlue}\vec\rank}$, ${\color{rkBlue}\rank_0}$, ${\color{siGreen}\vec \inv}$ and ${\color{siGreen}\inv_0}$  are the coefficients of a linear ranking function ${\color{rkBlue}f(\vec x})$ with linear supporting invariant ${\color{siGreen}I(\vec x})$.

The or-to-plus method inherits its soundness from method of Bradley--Manna--Sipma. 
Step~1 is an equivalence
transformation on the Bradley--Manna--Sipma constraints, Step~2 and
Step~3 strengthen the constraints, and the application of Farkas' Lemma
is an equivalence transformation.   Thus, a satisfying assignment of the
or-to-plus constraints obtained after the application of Farkas' Lemma is also a satisfying assignment of the Bradley--Manna--Sipma constraints.

\section{Completeness of the Or-to-Plus Method}

In the tradition of constraint-based synthesis for verification, we
will formulate completeness according to the following scheme: the method
\texttt{X} applied to a program $P$ in the class \texttt{Y} computes (the
coefficients of) a correctness argument of the form \texttt{Z}
whenever one exists (i.e., whenever a correctness argument of the form
\texttt{Z} exists for the program $P$).  Here, \texttt{X} is the
or-to-plus method, \texttt{Y} is the class of lasso programs, and
\texttt{Z} is a termination argument consisting of a linear ranking
function and an invariant of a form that we we define next.

\begin{definition}[Non-decreasing linear inductive invariant]
We call a linear inductive invariant ${\color{siGreen}\tr{\vec s} \cdot \vec x + s_0 \geq 0}$ of the lasso program P \emph{non-decreasing} if the loop implies that the value of the term ${\color{siGreen}\tr{\vec s} \cdot \vec x + s_0}$ does not decrease when executing the loop, i.e.,
$$ \tau_\mathsf{loop} \rightarrow{\color{siGreen} \tr{\vec s} \cdot \vec x'}\geq {\color{siGreen} \tr{\vec s} \cdot \vec x}.$$
\end{definition}

In \autoref{sec:examples} we give examples which may help to convey some
intuition about the meaning of `non-decreasing', examples of those
terminating programs that do have a linear ranking function with a
non-decreasing linear supporting invariant, and examples of those that
don't.  

\begin{figure}[t]
\begin{center}
\begin{minipage}{4cm}
\begin{verbatim}
x := y + 42;
while( x >= 0 ) {
    y := 2*y - x;
    x := (y + x) / 2;
}
\end{verbatim}

\end{minipage}
\begin{minipage}{6cm}
\vspace{-8mm}
\[
\begin{array}{ll}
 \tau_\mathsf{stem}: & x'=y+42\;\land\;y'=y\\[2mm]
 \tau_\mathsf{loop}: & x\geq 0 \;\land\; x'=y \;\land\; y'=2y-x
\end{array}
\]
\end{minipage}
\end{center}
\caption{Linear lasso program $P_\mathsf{diff42}=(\tau_\mathsf{stem},\tau_\mathsf{loop})$ that has the linear ranking function ${\color{rkBlue}f(x,y)=x}$ with linear supporting invariant ${\color{siGreen}x-y\geq 42}$. 
}
\label{fig-nonObvious}
\end{figure}

\begin{theorem}[Completeness]\label{thm-completeness}
The or-to-plus method applied to the linear lasso program $P$ succeeds and computes the coefficients of a linear ranking function with non-decreasing linear supporting invariant whenever one exists.
\end{theorem}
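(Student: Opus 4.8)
The plan is to prove the implication directly: assuming a linear ranking function $f(\vec x)=\tr{\vec\rank}\cdot\vec x+\rank_0$ together with a non-decreasing linear supporting invariant $I(\vec x)\equiv\tr{\vec\inv}\cdot\vec x+\inv_0\geq 0$ exists, I will exhibit an explicit satisfying assignment of the or-to-plus constraints \eqref{inv-stem4}--\eqref{rk-bound4}, together with the Farkas multipliers witnessing each of them. Since Farkas' Lemma is an equivalence, producing such multipliers shows that the linear program solved in Steps~2--3 is feasible, i.e.\ the method \emph{succeeds}; the soundness already established then guarantees that whatever the solver returns is a linear ranking function with a non-decreasing supporting invariant (the latter because \eqref{inv-loop4} is literally the non-decreasing condition). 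By hypothesis the four Bradley--Manna--Sipma constraints \eqref{inv-stem2}--\eqref{rk-bound2} hold for some $\delta>0$, and $\tau_\mathsf{loop}\rightarrow\tr{\vec\inv}\cdot\vec x'\geq\tr{\vec\inv}\cdot\vec x$ holds.

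First I would check that every implication to which I apply Farkas' Lemma has a satisfiable premise: from the standing assumption $\tau_\mathsf{stem}\circ\tau_\mathsf{loop}\neq\emptyset$ and \eqref{inv-stem2} one obtains a loop transition whose source satisfies $I$, so both $\tau_\mathsf{loop}$ and $I\land\tau_\mathsf{loop}$ are non-empty and the affine Farkas' Lemma applies. Writing $A_\mathsf{loop}=[A_1\mid A_2]$ for the split into the $\vec x$- and $\vec x'$-columns, applying Farkas' Lemma to \eqref{rk-decr2} yields $\mu_3\geq 0$ and $\vec\lambda_3\geq 0$ with $\tr{\vec\lambda_3}A_1=-\tr{\vec\rank}+\mu_3\tr{\vec\inv}$, $\tr{\vec\lambda_3}A_2=\tr{\vec\rank}$ and $\tr{\vec\lambda_3}\vec b_\mathsf{loop}\leq-\delta-\mu_3\inv_0$; to \eqref{rk-bound2} yields $\mu_4\geq 0$, $\vec\lambda_4\geq 0$ with $\tr{\vec\lambda_4}A_1=-\tr{\vec\rank}+\mu_4\tr{\vec\inv}$, $\tr{\vec\lambda_4}A_2=\tr{\vec 0}$ and $\tr{\vec\lambda_4}\vec b_\mathsf{loop}\leq\rank_0-\mu_4\inv_0$; and to the non-decreasing property yields $\vec\nu\geq 0$ with $\tr{\vec\nu}A_1=\tr{\vec\inv}$, $\tr{\vec\nu}A_2=-\tr{\vec\inv}$ and $\tr{\vec\nu}\vec b_\mathsf{loop}\leq 0$. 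Here $\mu_3,\mu_4$ are the multipliers attached to $I$ in the two premises.

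The heart of the argument is the following choice of or-to-plus solution: rescale the invariant by $\mu_3$ and shift the ranking function by $-\mu_4\vec\inv$, i.e.\ set the invariant coefficients to $\vec S=\mu_3\vec\inv,\;S_0=\mu_3\inv_0$, the ranking coefficients to $\vec R=\vec\rank-\mu_4\vec\inv,\;R_0=\rank_0-\mu_4\inv_0$, and keep $\delta'=\delta>0$. I then discharge the four constraints by naming their multipliers. Constraint \eqref{inv-stem4} follows by scaling the Farkas certificate of \eqref{inv-stem2} by $\mu_3\geq 0$; constraint \eqref{inv-loop4}, which is exactly the non-decreasing inequality for $\vec S$, is witnessed by $\mu_3\vec\nu$; constraint \eqref{rk-bound4} is witnessed by $\vec\lambda_4$, because subtracting $\mu_4\vec\inv$ from $\vec\rank$ is precisely what turns $\tr{\vec\lambda_4}A_1=-\tr{\vec\rank}+\mu_4\tr{\vec\inv}$ into $-\tr{\vec R}$ and the bound $\rank_0-\mu_4\inv_0$ into $R_0$; and constraint \eqref{rk-decr4} is witnessed by $\vec\lambda_3+\mu_4\vec\nu$, for which $\tr{(\vec\lambda_3+\mu_4\vec\nu)}A_2=\tr{\vec\rank}-\mu_4\tr{\vec\inv}=\tr{\vec R}$, $\tr{(\vec\lambda_3+\mu_4\vec\nu)}A_1=-\tr{\vec\rank}+(\mu_3+\mu_4)\tr{\vec\inv}=-\tr{\vec R}+\tr{\vec S}$, and the bound stays $\leq-\delta-S_0$ because $\tr{\vec\nu}\vec b_\mathsf{loop}\leq 0$. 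All multipliers are non-negative since $\mu_3,\mu_4,\vec\lambda_3,\vec\lambda_4,\vec\nu\geq 0$.

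I expect the main obstacle to be exactly the interaction between the two modifications. The shift $\vec R=\vec\rank-\mu_4\vec\inv$ is forced by Transformation~2 (dropping the invariant from the bound constraint), but it perturbs the decrease constraint \eqref{rk-decr4}; the reason the perturbation can be absorbed is that the required correction in the $\vec x'$-column is exactly $-\mu_4\tr{\vec\inv}$, which the non-decreasing certificate $\vec\nu$ supplies with the non-negative multiplier $\mu_4$. This pinpoints the role of the hypothesis: non-decreasingness is precisely what makes $\vec\nu\geq 0$ available, and hence what reconciles \eqref{rk-bound4} with \eqref{rk-decr4}. Finally I would dispatch the degenerate case $\mu_3=0$ by observing that then $\vec S=\tr{\vec 0}$ and $S_0=0$ define the trivial always-true predicate, which is vacuously a non-decreasing inductive invariant, so the constructed assignment is still valid and the method still succeeds.
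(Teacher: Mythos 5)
Your proof is correct, and it builds essentially the same witness as the paper: the ranking coefficients get shifted by the invariant scaled with the multiplier that Farkas' Lemma attaches to $I$ in the bound constraint (your $\vec\rank-\mu_4\vec\inv$ is the paper's $\vec\ranki-\mu_1\cdot\vec\invi$), and the invariant gets rescaled by the multiplier attached to $I$ in the decrease constraint (your $\mu_3\vec\inv$, playing the role of the paper's $\mu_2\cdot\vec\invi$), with non-decreasingness supplying exactly the correction term created by the shift. The difference is the technical route. The paper works at the level of implications and factors both `or-to-plus' steps through \autoref{lem-halfspaces}, a half-space--rotation lemma proved by applying Farkas' Lemma forwards and then backwards; your satisfiability check for $I\land\tau_\mathsf{loop}$ is the same fact the paper records as non-validity of \eqref{eq-completeness-executable}. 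You instead never leave the certificate level: you extract $(\mu_3,\vec\lambda_3)$, $(\mu_4,\vec\lambda_4)$, $\vec\nu$ and hand explicit multipliers ($\mu_3$ times the stem certificate, $\mu_3\vec\nu$, $\vec\lambda_3+\mu_4\vec\nu$, $\vec\lambda_4$) to the Farkas-transformed constraints \eqref{inv-stem4}--\eqref{rk-bound4}; the algebra checks out. This buys a more direct conclusion---feasibility of the linear program actually solved in Step~3 is exhibited outright, with no backward use of Farkas' Lemma---and your observation that \eqref{inv-loop4} literally is the non-decreasing condition shows that \emph{any} solution the solver returns, not just your witness, is a ranking function with non-decreasing supporting invariant, a point the paper leaves implicit; what the paper's route buys is the reusable geometric lemma, which it also refers to when discussing incompleteness over the integers. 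Two minor remarks: you should note $\tau_\mathsf{stem}\neq\emptyset$ (immediate from $\tau_\mathsf{stem}\circ\tau_\mathsf{loop}\neq\emptyset$) so that the certificate for \eqref{inv-stem2} exists at all, and the special-casing of $\mu_3=0$ is unnecessary, since none of your verifications uses $\mu_3>0$.
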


To prove this theorem we use the following lemma.

\begin{lemma}\label{lem-halfspaces}
Given are
\begin{enumerate}
\item[(1)] satisfiable linear inequalities $A\cdot \vec x \leq \vec b$,
\item[(2)] an inequality $\tr{\vec g} \cdot \vec x + g_0\geq 0$, and
\item[(3)] a strict inequality $\tr{\vec h} \cdot \vec x + h_0> 0$.
\end{enumerate}
If $A \cdot  \vec x \leq \vec b$ does not imply the strict inequality
(3), but the disjunction of (2) and (3), i.e.\
$$\forall\vec x\quad A\cdot  \vec x \leq \vec b \;\rightarrow\; \tr{\vec g}\cdot  \vec x + g_0\geq 0 \;\lor\; \tr{\vec h}\cdot  \vec x + h_0> 0,$$
then there exists a constant $\mu \geq 0$ such that
$$\forall\vec x\quad A\cdot  \vec x \leq \vec b \;\rightarrow\;
(\tr{\vec g}\cdot \vec x + g_0) + \mu \cdot (\tr{\vec h}\cdot \vec x + h_0)\geq 0.$$
\end{lemma}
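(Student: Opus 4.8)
The plan is to reduce the lemma to a single application of Farkas' Lemma. Write $P = \{\vec x \mid A\cdot\vec x \leq \vec b\}$ for the (nonempty) polyhedron given by~(1), and abbreviate the two affine functions by $g(\vec x) = \tr{\vec g}\cdot\vec x + g_0$ and $h(\vec x) = \tr{\vec h}\cdot\vec x + h_0$. The crucial first move is to unpack the hypothesis that $A\cdot\vec x\leq\vec b$ does \emph{not} imply the strict inequality~(3): this says precisely that some $\vec x \in P$ satisfies $h(\vec x)\leq 0$, i.e.\ that the polyhedron
$$Q = \{\vec x \mid A\cdot\vec x\leq\vec b \;\land\; \tr{\vec h}\cdot\vec x \leq -h_0\},$$
obtained by conjoining the negation of~(3) to~(1), is nonempty. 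It is exactly the strictness of~(3) that lets me take its negation to be the \emph{closed} half-space $h(\vec x)\leq 0$, so that $Q$ is again a polyhedron and Farkas' Lemma remains applicable.

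Next I would show that $g$ is nonnegative on $Q$. Every $\vec x\in Q$ satisfies the antecedent $A\cdot\vec x\leq\vec b$, so by hypothesis the disjunction $g(\vec x)\geq 0 \,\lor\, h(\vec x)>0$ holds; but $\vec x\in Q$ also forces $h(\vec x)\leq 0$, ruling out the second disjunct, so $g(\vec x)\geq 0$. Hence the satisfiable system defining $Q$ implies $g(\vec x)\geq 0$, equivalently $-\tr{\vec g}\cdot\vec x \leq g_0$.

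Now I would apply Farkas' Lemma to $Q$ and this consequence. Splitting the resulting multiplier vector into the part $\vec\lambda\geq 0$ for the rows of $A$ and the single coefficient $\mu\geq 0$ for the appended row $\tr{\vec h}\cdot\vec x\leq -h_0$, Farkas yields
$$\tr{\vec\lambda}\cdot A + \mu\cdot\tr{\vec h} = -\tr{\vec g} \qquad\text{and}\qquad \tr{\vec\lambda}\cdot\vec b - \mu\cdot h_0 \leq g_0.$$
I claim this $\mu$ is the constant the lemma asks for.

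Finally I would verify the target inequality on all of $P$. Using the first Farkas equation to rewrite the linear part, any $\vec x\in P$ gives
$$g(\vec x) + \mu\cdot h(\vec x) = (\tr{\vec g}+\mu\,\tr{\vec h})\cdot\vec x + g_0 + \mu h_0 = -\tr{\vec\lambda}\cdot A\cdot\vec x + g_0 + \mu h_0.$$
Since $\vec\lambda\geq 0$ and $A\cdot\vec x\leq\vec b$, we have $-\tr{\vec\lambda}\cdot A\cdot\vec x \geq -\tr{\vec\lambda}\cdot\vec b$, and combining with the second Farkas inequality $-\tr{\vec\lambda}\cdot\vec b \geq -g_0 - \mu h_0$ yields $g(\vec x) + \mu\cdot h(\vec x)\geq 0$. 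The only step requiring genuine care is the first one---recognizing that the non-implication hypothesis is exactly the nonemptiness (hence Farkas-satisfiability) of $Q$, and that the strictness of~(3) is what keeps $Q$ closed; once the right polyhedron is identified, the remainder is routine bookkeeping of the Farkas multipliers.
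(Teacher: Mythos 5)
Your proof is correct and follows essentially the same route as the paper's: rewrite the disjunction as an implication over the augmented polyhedron $Q$ defined by $A\cdot\vec x\leq\vec b$ and $\tr{\vec h}\cdot\vec x+h_0\leq 0$, observe that $Q$ is satisfiable precisely because (1) does not imply (3), and apply Farkas' Lemma to extract the multipliers $\vec\lambda\geq 0$ and $\mu\geq 0$. The only difference is the final step, where the paper invokes Farkas' Lemma a second time (in the easy direction, over the original system $A\cdot\vec x\leq\vec b$) to return to a universally quantified implication, whereas you verify the same conclusion by direct computation with the multipliers --- the identical reasoning, just written out by hand.
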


\begin{figure}
\begin{minipage}{7cm}
\begin{tikzpicture}[scale=0.8]


\draw[color=orange] (-4,-3) -- (4,3);
\fill[color=orange,opacity=0.2] (-4,-3) -- (4,3) -- (4,3) -- (-4,3);
\node at (-2.6, -0.5) {$\tr{\vec g}\cdot \vec x + g_0 \geq 0$};

\fill[color=violet,opacity=0.2] (-4,1) -- (4,-1) -- (4,3) -- (-4,3);
\node at (2.5, 0.1) {$\tr{\vec h}\cdot \vec x + h_0 > 0$};

\filldraw[black] (0,0) circle (1mm);
\node at (0,-0.4) {$Z$};

\draw[dashed] (-2, 0.2) -- (2, 0.8) -- (1, 2) -- (-1, 2.8) -- (-3, 2) -- (-2, 0.2);
\node at (-1, 1.6) {$A\cdot \vec x \leq b$};
\end{tikzpicture}
\end{minipage}
\begin{minipage}{5cm}
Let $H = \{ \vec x \,|\, \tr{\vec g}\cdot \vec x + g_0 \geq 0\}$,
and $H' = \{ \vec x \,|\, \tr{\vec h}\cdot \vec x + h_0 > 0 \}$ be
half-spaces defined by linear inequalities.
A half-space $H_\mu = \{ \vec x \,|\, (\tr{\vec g}\cdot \vec x + g_0)
+ \mu \cdot (\tr{\vec h}\cdot \vec x + h_0) \geq 0 \}$ defined by a weighted
sum is a rotation of $H$ around the intersection $Z$ of the boundary
of $H$ and the boundary of $H'$.

If a polyhedron $X$ is contained in the union $H \cup H'$, then there
is a half-space $H_\mu$ defined by a weighted sum that contains $X$.
\end{minipage}

\label{fig:geometric}
\caption{A geometrical interpretation of \autoref{lem-halfspaces}.}
\end{figure}

\begin{proof}[of \autoref{lem-halfspaces}]
\[
\forall\vec x \quad A\cdot\vec x \leq \vec b \;\rightarrow\; (\tr{\vec g}
\cdot\vec x + g_0 \geq 0 \;\lor\; \tr{\vec h}\cdot\vec x + h_0 > 0)
\]
is equivalent to
\[
\forall\vec x \quad (A\cdot\vec x \leq \vec b
\;\land\; \tr{\vec h}\cdot\vec x + h_0 \leq 0) \;\rightarrow\; \tr{\vec g}
\cdot\vec x + g_0 \geq 0.
\]
By assumption, {\it (1)} does not imply {\it (3)}, so $A\cdot\vec x
\leq \vec b \;\land\; \tr{\vec h}\cdot\vec x + h_0 \leq 0$ is
satisfiable, and by Farkas' Lemma this formula is equivalent to
\[
\exists \mu \geq 0 \;\exists \vec\lambda \geq 0
\quad \mu \cdot \tr{\vec h} + \tr{\vec\lambda}\cdot A = -\tr{\vec g}
\;\land\; \tr{\vec\lambda}\cdot\vec b + \mu \cdot (-h_0) \leq g_0,
\]
and thus
\[
\exists \mu \geq 0 \;\exists \vec\lambda \geq 0
\quad \tr{\vec\lambda}\cdot A = -(\mu \cdot \tr{\vec h} + \tr{\vec g})
\;\land\; \tr{\vec\lambda}\cdot\vec b \leq \mu \cdot h_0 + g_0.
\]
Because $A\cdot\vec x \leq \vec b$ is satisfiable by assumption,
Farkas' Lemma can be applied again to yield
$$\exists \mu \geq 0 \;\forall\vec x \qquad A\cdot\vec x \leq \vec b
\;\rightarrow\; -(\mu \cdot \tr{\vec h} + \tr{\vec g})\vec x \leq \mu
\cdot h_0 + g_0. \quad\qed$$
\end{proof}

\begin{proof}[of \autoref{thm-completeness}]
Let $f(\vec x) = \tr{\vec \ranki}\cdot \vec x + \ranki_0$ be a ranking
 function with non-decreasing supporting invariant $I(\vec x) \equiv \tr{\vec\invi}\cdot \vec x + \invi_0 \geq 0$ for the lasso program $P$.
Since executions of our lasso programs comprise at least three states,
there can be no supporting invariant that contradicts the loop, i.e.\
\begin{align}
A_\mathsf{loop}\cdot(^{\vec x}_{\vec x'})\leq \vec b_\mathsf{loop}
\;\rightarrow\; -\tr{\vec{\invi}}\cdot \vec x - \invi_0 > 0
\label{eq-completeness-executable}
\end{align}
is not valid.
From \eqref{rk-bound1} it follows that
$$\tr{\vec{\invi}}\cdot \vec x + \invi_0\geq 0 \;\land\;
A_\mathsf{loop}\cdot(^{\vec x}_{\vec x'})\leq \vec b_\mathsf{loop}
\;\rightarrow\;  \tr{\vec{\ranki}}\cdot \vec x+\ranki_0 \geq 0,$$
and hence the implication
$$A_\mathsf{loop}\cdot(^{\vec x}_{\vec x'})\leq \vec b_\mathsf{loop}
\;\rightarrow\;  \tr{\vec{\ranki}}\cdot \vec x+\ranki_0\geq 0 \;\lor\;
-\tr{\vec{\invi}}\cdot \vec x - \invi_0 > 0$$
is valid.
By \eqref{eq-completeness-executable} and \autoref{lem-halfspaces}
there is a $\mu_1 \geq 0$ such that
$$A_\mathsf{loop}\cdot(^{\vec x}_{\vec x'})\leq \vec b_\mathsf{loop}
\;\rightarrow\; (\tr{\vec{\ranki}}\cdot \vec x+\ranki_0) \;+\;
\mu_1\cdot(-\tr{\vec{\invi}}\cdot \vec x - \invi_0) \geq 0$$ is valid.
If we assign $\vec \rank\mapsto \vec \ranki - \mu_1 \cdot \invi,
\rank_0\mapsto \ranki_0 - \mu_1 \cdot \invi_0$, then \eqref{rk-bound4}
is satisfied.

Because $I(\vec x) \equiv \vec\invi \cdot \vec x + \invi_0 \geq 0$ is
a non-decreasing invariant,
$$A_\mathsf{loop}\cdot(^{\vec x}_{\vec x'})\leq \vec b_\mathsf{loop}
\;\rightarrow\; \tr{\vec{\invi}}\cdot \vec x'-\tr{\vec{\invi}}\cdot \vec x
\geq 0,$$ and hence, since $\mu_1 \geq 0$,
\begin{align}
A_\mathsf{loop}\cdot(^{\vec x}_{\vec x'})\leq \vec b_\mathsf{loop}
\;\rightarrow\; -\mu_1 \cdot \tr{\vec{\invi}} \cdot (\vec x - \vec x') \geq 0.
\label{eq-completeness-si}
\end{align}

From \eqref{rk-decr1} we know that
$$\tr{\vec{\invi}}\cdot \vec x + \invi_0 \geq 0 \;\land\;
A_\mathsf{loop}\cdot(^{\vec x}_{\vec x'})\leq \vec b_\mathsf{loop}
\;\rightarrow\; \tr{\vec{\ranki}}\cdot \vec x-\tr{\vec{\ranki}}\cdot \vec x'
\geq \delta,$$ and hence equivalently
$$A_\mathsf{loop}\cdot(^{\vec x}_{\vec x'})\leq \vec b_\mathsf{loop}
\;\rightarrow\; \tr{\vec{\ranki}}\cdot \vec x-\tr{\vec{\ranki}}\cdot \vec
x'\geq\delta \;\lor\; -\tr{\vec{\invi}}\cdot \vec x -\invi_0 > 0.$$
With \eqref{eq-completeness-si} we obtain validity of the following
formula.
$$A_\mathsf{loop}\cdot(^{\vec x}_{\vec x'})\leq \vec b_\mathsf{loop}
\;\rightarrow\; (\tr{\vec{\ranki}}-\mu_1\cdot\tr{\vec{\invi}})\cdot (\vec x-\vec
x') \geq\delta \;\lor\; -\tr{\vec{\invi}}\cdot \vec x -\invi_0 > 0$$
By \eqref{eq-completeness-executable} and \autoref{lem-halfspaces}
there exists a $\mu_2 \geq 0$ such that
$$A_\mathsf{loop}\cdot(^{\vec x}_{\vec x'})\leq \vec b_\mathsf{loop}
\;\rightarrow\; (\tr{\vec{\ranki}}-\mu_1\cdot\tr{\vec{\invi}})\cdot (\vec x-\vec
x') + \mu_2\cdot (-\tr{\vec{\invi}}\cdot \vec x -\invi_0) > \delta.$$
We pick the assignment $\vec \rank \mapsto \vec \ranki -
\mu_1 \cdot \invi, \rank_0 \mapsto \ranki_0 - \mu_1 \cdot \invi_0,
\vec \inv \mapsto \mu_2 \cdot \vec\invi, \inv_0 \mapsto \mu_2 \cdot
\invi_0$, which hence satisfies \eqref{rk-decr4}.  We already argued
that it satisfies \eqref{rk-bound4}, and from $\mu_2 \geq 0$ and
the fact that $I(\vec x)$ is a non-decreasing inductive invariant
it follows that the assignment also satisfies \eqref{inv-stem4} and
\eqref{inv-loop4}.
Hence, the ranking function ${\color{rkBlue}\tr{(\vec \ranki -\mu_1 \cdot \invi)}\cdot \vec x + \ranki_0 - \mu_1 \cdot \invi_0}$ with supporting invariant ${\color{siGreen}\tr{(\mu_2 \cdot \vec\invi)}\cdot \vec x + \mu_2 \cdot\invi_0\geq 0}$ can be found by the or-to-plus method.
\qed
\end{proof}

\section{Examples} \label{sec:examples}

\begin{wrapfigure}{r}{0.4\textwidth}
  \vspace{-30pt}
  \begin{center}
  \begin{minipage}{0.3\textwidth}
\begin{verbatim}
y := 23;
while( x >= y ) {
    x := x - 1;
}
\end{verbatim}
\end{minipage}
  \end{center}
  \vspace{-10pt}
  \caption{Lasso program $P_\mathsf{bound}$}
  \vspace{-10pt}
  \label{fig-other}
\end{wrapfigure}

Our three transformations strengthened the Bradley--Manna--Sipma
constraints, hence the solution space of the or-to-plus constraints is
smaller than the solution space of the Bradley--Manna--Sipma
constraints. This can be seen e.g., in the example depicted in
\autoref{fig-other}. The program $P_\mathsf{bound}$ has the linear ranking function
${\color{rkBlue}f(x,y)=x}$ with linear supporting invariant
${\color{siGreen}y\geq 23}$, but the coefficients of this ranking
function and supporting invariant are no solution of the or-to-plus constraints; the constraint \ref{rk-bound4} is violated.
Does this mean that our method will not succeed?
No, it does not.  By \autoref{thm-completeness}, in fact, we do know that the method will succeed. 
I.e., since we know of some linear ranking function with non-decreasing supporting invariant  (in this case, ${\color{rkBlue}f(x,y)=x}$  and ${\color{siGreen}y\geq 23}$), even if it is not a solution, we know that there exists one which is a solution (here, for example, ${\color{rkBlue}f(x,y)=x-y}$ with the (trivial) supporting invariant ${\color{siGreen}0\geq 0}$).

\begin{wrapfigure}{l}{0.35\textwidth}
  \vspace{-10pt}
  \begin{center}
\begin{center}
\begin{minipage}{0.35\textwidth}
\begin{verbatim}
y := 2;
while( x >= 0 ) {
    x := x - y;
    y := (y + 1) / 2;
}
\end{verbatim}
\end{minipage}
\end{center}
  \end{center}
  \vspace{-10pt}
  \caption{Lasso program $P_\mathsf{zeno}$}
  \vspace{-10pt}
  \label{fig-zeno}
\end{wrapfigure}

The prerequisite of \autoref{thm-completeness} is the existence of a non-decreasing supporting invariant. 
There are linear lasso programs that have a linear ranking function with linear supporting invariant, but do not have a linear ranking function with a non-decreasing linear supporting invariant. E.g., for the lasso programs depicted in \autoref{fig-zeno} and \autoref{fig-havoc} our or-to-plus method is not able to synthesize a ranking function for these programs.

The linear lasso program $P_\mathsf{zeno}$ depicted in \autoref{fig-zeno}
has the linear ranking function ${\color{rkBlue}f(x,y)=x}$ with the linear supporting invariant ${\color{siGreen}y\geq 1}$. However this inductive invariant is not non-decreasing; while executing the loop the value of the variable ${\color{siGreen}y}$ converges to $1$ in the following sequence. $2,\; 1+\frac{1}{2},\; 1+\frac{1}{4},\; 1+\frac{1}{8},\dots$.

\begin{wrapfigure}{r}{0.35\textwidth}
  \vspace{-20pt}
  \begin{center}
\begin{center}
\begin{minipage}{0.3\textwidth}
\begin{verbatim}
assume y >= 1;
while( x >=0 ) {
    x := x - y;
    havoc y;
    assume (y >= 1);
}
\end{verbatim}
\end{minipage}
\end{center}
  \end{center}
  \vspace{-10pt}
  \caption{Lasso program $P_\mathsf{wild}$}
  \vspace{-10pt}
  \label{fig-havoc}
\end{wrapfigure}

The statement \ \texttt{havoc y;}\ in the lasso program $P_\mathsf{wild}$  is a nondeterministic assignment to the variable $y$. The relations $\tau_\mathsf{stem}$ and $\tau_\mathsf{loop}$ of this lasso program are given by the constraints $y'\geq 1$ and $x\geq 0 \;\land\; x'=x-y \;\land\; y'\geq 1$.
$P_\mathsf{wild}$ has the ranking function ${\color{rkBlue}f(x,y)=x}$ with the supporting invariant ${\color{siGreen}y\geq 1}$, however this inductive invariant is not non-decreasing in each execution of the loop the variable ${\color{siGreen}y}$ can get any value greater than or equal to one.

The next example shows that nondeterministic updates are no general obstacle for our or-to-plus method. In the linear lasso program $P_\mathsf{array}$ the loop iterates over an array of positive integers.  The index accessed in the next
\begin{wrapfigure}{l}{0.43\textwidth}
  \vspace{-20pt}
  \begin{center}
\begin{center}
\begin{minipage}{0.43\textwidth}
\begin{verbatim}
offset := 1;
i := 0;
while(i<=a.length) {
    assume a[i]>=0;
    i := i + offset + a[i];
}
\end{verbatim}
\end{minipage}
\end{center}
  \end{center}
  \vspace{-10pt}
  \caption{Lasso program $P_\mathsf{array}$}
  \vspace{-10pt}
  \label{fig-array}
\end{wrapfigure}
 iteration is the sum of the current index, the current entry of the array, and an offset. The relations $\tau_\mathsf{stem}$ and $\tau_\mathsf{loop}$ of this lasso program are given by the constraints $\mathit{offset}'= 1 \;\land\; i'=0$ and $i\leq \mathit{a.length} \;\land\; \mathit{curVal}'\geq 0 \;\land\; i'=i+\mathit{offset}+\mathit{curVal}'$. The variable $\mathit{curVal}$ which represents the current entry of the array \texttt{a[i]} can get any value greater than or equal to one in each loop iteration. The or-to-plus method finds the linear ranking function ${\color{rkBlue}f(i,\mathit{offset})=i-\mathit{a.length}}$ with the linear supporting invariant ${\color{siGreen}\mathit{offset}\geq 1}$.

\section{Lasso Programs over the Integers}

In the preceding sections we considered lasso programs over the
reals.  In this section we discuss the applicability of the or-to-plus method to linear lasso programs over the integers, i.e., programs where the set of states $\Sigma$ is a subset of $\mathbb{Z}^n$.  We still use real-valued ranking functions. We obtain the constraints for coefficients of a linear ranking function with linear supporting invariant by restricting the range of the universal quantification in the constraints \ref{inv-stem4},  \ref{inv-loop4}, \ref{rk-decr4}, and \ref{rk-bound4} to the integers. E.g., the constraint \ref{rk-decr4} for linear lasso programs over the integers is
\begin{align*}
\exists\delta>0\;\forall \vec x\in\mathbb{Z}^n\;\forall \vec x'\in\mathbb{Z}^n\quad  \tau_\mathsf{loop}(\vec x, \vec x') & \rightarrow {\color{rkBlue}\tr{\vec\rank}\cdot\vec x}-{\color{rkBlue}\tr{\vec\rank}\cdot\vec x'} {\color{siGreen}-\tr{\vec\inv}\cdot\vec x - \inv_0} \geq \delta
\end{align*}
where the domain of the coefficients ${\color{rkBlue}\vec\rank}$, ${\color{rkBlue}\rank_0}$, ${\color{siGreen}\vec \inv}$, and ${\color{siGreen}\inv_0}$ and the quantified variable $\delta$ are the reals.
Now, Farkas' lemma is not an equivalence transformation, its application results in weaker formulas.
This means the or-to-plus method is still 
sound, but we loose the completeness result of \autoref{thm-completeness}. An example for this is 
\begin{wrapfigure}{l}{0.40\textwidth}
  \vspace{-20pt}
  \begin{center}
\begin{center}
\begin{minipage}{0.40\textwidth}
\begin{verbatim}
assume 2*y >= 1;
while( x >= 0 ) {
    x := x - 2*y + 1;
}
\end{verbatim}
\end{minipage}
\end{center}
  \end{center}
  \vspace{-10pt}
  \caption{Lasso program $P_\mathsf{nonIntegral1}$}
  \vspace{-20pt}
  \label{fig-onlyInt1}
\end{wrapfigure}
the program $P_\mathsf{nonIntegral}$, depicted in \autoref{fig-onlyInt1} that has the following transition relations.
\[
\begin{array}{ll}
 \tau_\mathsf{stem}: & 2y'\geq 1 \;\land\; x'=x\\[2mm]
 \tau_\mathsf{loop}: & x\geq 0 \;\land\; x'=x - 2y + 1 \;\land\; y'=y;
\end{array}
\]
Over integer variables, $P_\mathsf{nonIntegral1}$ has the linear ranking function ${\color{rkBlue}f(x,y)=x}$ with the linear supporting invariant ${\color{siGreen}y\geq 1}$. 
Over real-valued variables, $P_\mathsf{nonIntegral1}$ does not terminate.
If we add the additional constraint $y'\geq 1$ to $\tau_\mathsf{stem}$, the programs' semantics over the integers is not changed, but we are able to synthesize a linear ranking function with a linear supporting invariant.
Adding this additional constraint gives the constraints a property that we formally define as follows.

\paragraph{Integral constraints.} A conjunction of linear constraints
$A\cdot \vec x\leq \vec b$ is called \defo{integral} if the set of
satisfying assignments over the reals $S := \{\vec r \in \mathbb{R}^n \mid A\cdot \vec r\leq \vec b\}$ coincides with the integer hull of $S$ (the convex hull of all integer vectors in $S$).

\medskip

For each conjunction of $m$ linear constraints there is an equivalent
conjunction of at most $2^m$ linear constraints that is
integral~\cite{Schrijver:1986:TLI:17634}. We add an
additional step to the or-to-plus method in which we make the constraints in the
stem transition $\tau_\mathsf{stem}$ and loop transition $\tau_\mathsf{loop}$ integral.

\begin{center}
\fbox{
 \begin{minipage}{11cm}
 \textbf{The Or-to-Plus Method (Int)}

  \begin{description}
  \item[Input:] linear lasso program $P$ with integer variables
  \item[Output:] coefficients ${\color{rkBlue}\vec\rank}$, ${\color{rkBlue}\rank_0}$, ${\color{siGreen}\vec \inv}$, and ${\color{siGreen}\inv_0}$ of linear ranking function with linear supporting invariant 
 \end{description}

 \begin{enumerate}
  \item Replace $\tau_\mathsf{stem}$ and $\tau_\mathsf{loop}$ by equivalent integral linear constraints.
  \item Set up constraints \ref{inv-stem4},  \ref{inv-loop4}, \ref{rk-decr4}, and \ref{rk-bound4} for $P$.
  \item Apply Farkas' Lemma to each constraint.
  \item Obtain ${\color{rkBlue}\vec\rank}$, ${\color{rkBlue}\rank_0}$, ${\color{siGreen}\vec \inv}$, and ${\color{siGreen}\inv_0}$,  by linear constraint solving.
 \end{enumerate}
 \end{minipage}
}
\end{center}

That we find more solutions after making the linear constraints $\tau_\mathsf{stem}$ and
$\tau_\mathsf{loop}$ integral is due to the following lemma which was stated in~\cite{journals/fmsd/CookKRW13}. We present our proof for the purpose of self-containment.

\begin{lemma}[Integral version of Farkas' Lemma]\label{lem-integral-farkas}
Given a conjunction of linear constraints $A\cdot \vec x\leq \vec b$
which is satisfiable and integral, and a linear constraint $\tr{\vec c}\cdot \vec x \leq \delta$, 
\begin{center}
$\forall \vec x\in\mathbb{Z}^n \;\;  (A\cdot \vec x\leq \vec b \rightarrow \tr{\vec c}\cdot \vec x \leq \delta)$ \ \ \ iff \ \ \  $\exists \vec\lambda\;\;(\vec\lambda\geq 0 \land \tr{\vec\lambda}\cdot A = \tr{\vec c} \land \tr{\vec\lambda}\cdot \vec b\leq \delta)$
\end{center}
\end{lemma}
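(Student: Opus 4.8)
The plan is to reduce the integer-restricted statement to the real-valued affine Farkas' Lemma already recorded in the preliminaries, using integrality as the bridge between integer and real points. The backward direction ($\Leftarrow$) requires no integrality at all and is literally the easy half of ordinary Farkas: given $\vec\lambda\geq 0$ with $\tr{\vec\lambda}\cdot A=\tr{\vec c}$ and $\tr{\vec\lambda}\cdot\vec b\leq\delta$, any $\vec x$ whatsoever (integer or real) satisfying $A\cdot\vec x\leq\vec b$ obeys $\tr{\vec c}\cdot\vec x=\tr{\vec\lambda}\cdot A\cdot\vec x\leq\tr{\vec\lambda}\cdot\vec b\leq\delta$, so in particular it holds for all integer $\vec x$. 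Thus only the forward direction carries content.

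For the forward direction ($\Rightarrow$), first I would write $S=\{\vec r\in\mathbb{R}^n\mid A\cdot\vec r\leq\vec b\}$ and assume $\tr{\vec c}\cdot\vec x\leq\delta$ for every integer point $\vec x\in S$. The key move is to \emph{promote} this hypothesis from the integer points of $S$ to all real points of $S$. Since $A\cdot\vec x\leq\vec b$ is integral, $S$ coincides with the integer hull, i.e.\ with the convex hull of the integer vectors it contains; hence each real $\vec x\in S$ is a finite convex combination $\vec x=\sum_i\alpha_i\,\vec y_i$ of integer points $\vec y_i\in S\cap\mathbb{Z}^n$ with $\alpha_i\geq 0$ and $\sum_i\alpha_i=1$. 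Because $\tr{\vec c}\cdot(\cdot)$ is linear, $\tr{\vec c}\cdot\vec x=\sum_i\alpha_i\,\tr{\vec c}\cdot\vec y_i\leq\sum_i\alpha_i\,\delta=\delta$. This establishes $\forall\vec x\in\mathbb{R}^n\ (A\cdot\vec x\leq\vec b\rightarrow\tr{\vec c}\cdot\vec x\leq\delta)$. Since $A\cdot\vec x\leq\vec b$ is satisfiable by hypothesis, the real affine Farkas' Lemma now applies verbatim and delivers the required $\vec\lambda\geq 0$ with $\tr{\vec\lambda}\cdot A=\tr{\vec c}$ and $\tr{\vec\lambda}\cdot\vec b\leq\delta$, closing the equivalence.

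The hard part, conceptually, is the step that every point of $S$ is a genuinely \emph{finite} convex combination of integer points of $S$, which is delicate when $S$ is unbounded: one must know that no limit of integer points is needed and that the recession directions are themselves integer-generated. Here I would lean on the fact that ``convex hull'' by definition means the set of finite convex combinations (so the set equality $S=\mathrm{conv}(S\cap\mathbb{Z}^n)$ demanded by integrality already yields finite combinations), together with the polyhedral structure of integral rational polyhedra (the integer hull is again a polyhedron with the same recession cone). This justifies the convex-combination representation, and the rest is the purely linear averaging argument above followed by an invocation of the real Farkas' Lemma.
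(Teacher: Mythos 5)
Your proof is correct, and at the top level it performs the same reduction as the paper: use integrality to transfer the universally quantified implication from the integer points of $S=\{\vec x\mid A\cdot\vec x\le\vec b\}$ to all real points, then invoke the real affine Farkas' Lemma from the preliminaries. The difference is in how the transfer step is justified. The paper casts the statement as a linear program $\max\{\tr{\vec c}\cdot\vec x \mid A\cdot\vec x\le\vec b\}\le\delta$ and asserts that, by integrality, this optimum is attained at an integral vector, so the integer and real optima agree; your argument instead unfolds the definition of the integer hull: every real point of $S$ is a \emph{finite} convex combination of integer points of $S$, and linearity of $\vec x\mapsto\tr{\vec c}\cdot\vec x$ propagates the bound $\le\delta$ through the average. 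These are two dressings of the same fact, but yours is arguably the more airtight one: the paper's appeal to ``the optimum solution'' silently assumes the LP is bounded and its maximum attained (if the LP is unbounded over the reals, one must argue separately that both sides of the equivalence fail, which again comes down to your convex-combination observation), whereas your computation needs no attainment and covers bounded and unbounded polyhedra uniformly. Your closing remark is also on point: since the paper defines ``integral'' as the set equality $S=\mathrm{conv}(S\cap\mathbb{Z}^n)$ and the convex hull consists by definition of finite convex combinations, no recession-cone or closedness analysis is required, and the handling of the backward direction by direct multiplication with $\vec\lambda\ge 0$ is standard and correct.
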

\begin{proof}
We write this statement as a linear programming problem.
$$\textbf{(P)} \qquad \max \{ \tr{\vec c} \cdot \vec x \;|\; A\cdot\vec x \leq \vec b \}
\leq \delta$$
Because the constraints $A \cdot \vec x \leq \vec b$ are integral,
there is an integral vector $\vec x \in \mathbb{Z}^n$ such that
$\tr{\vec c} \cdot \vec x$ is the optimum solution to {\bf (P)}.
Thus the optimum over integers is $\leq \delta$ if and only if the
optimum of the reals is.  The statement now follows from the
real version of Farkas' Lemma.
\qed
\end{proof}

\begin{wrapfigure}{r}{0.45\textwidth}
  \vspace{-32pt}
\begin{flushright}
\begin{minipage}{0.42\textwidth}
\begin{verbatim}
assume 2*y >= z;
while( x >= 0 && z == 1 ) {
    x := x - 2*y + 1;
}
\end{verbatim}
\end{minipage}
\end{flushright}
  \vspace{-10pt}
  \caption{Lasso program $P_\mathsf{nonIntegral2}$}
  \vspace{-15pt}
  \label{fig-onlyInt2}
\end{wrapfigure}

However, even if $\tau_\mathsf{stem}$ and $\tau_\mathsf{loop}$ are integral, our
 method is not complete over the integers.
In the \hyperref[thm-completeness]{completeness proof} for the reals 
we applied Farkas' Lemma to conjunctions of a polyhedron 
$A \cdot \vec x \leq \vec b$ and an inequality $\tr{\vec h} \cdot x + h_0 \leq 0$.
This inequality contains free variables, namely the coefficients of the 
supporting invariant ${\color{siGreen}\tr{\vec s} \cdot \vec x + s_0 \geq 0}$.
Even if $\tau_\mathsf{stem}$ and $\tau_\mathsf{loop}$ are integral, this 
conjunction might not be integral and we cannot apply the integer 
version of Farkas' lemma in this case. 

A counterexample to completeness of our integer version of the or-to-plus method is the linear lasso program $P_\mathsf{nonIntegral2}$ depicted in \autoref{fig-onlyInt2}.

\section{Conclusion}

We have presented a constraint-based synthesis method for a  class
of programs that was not investigated before for the synthesis problem.
The class is restricted (though less restricted than the widely studied class of simple
while programs) but still requires the combined synthesis of not only
a ranking function but also an invariant.
We have formulated and proven a completeness theorem that gives us an
indication on the extent of power of a method that does without nonlinear constraint solving.

We implemented the or-to-plus method as plugin of the
\textsc{Ultimate}
 software analysis framework.
A version that allows one to `play around' with lasso programs is available via a web interface at the following URL.
\begin{center}
 \texttt{\url{http://ultimate.informatik.uni-freiburg.de/LassoRanker}}
\end{center}

As mentioned in the introduction, the class of lasso programs is
motivated by the fact that they are a natural way (and, it seems, the
only way) to represent an (infinite) counterexample path in a control flow graph.
It is a topic of future research to explore the different scenarios in
practical tools that use a module 
to  find a ranking
function and a supporting invariant for a lasso program
(e.g., in
\cite{DBLP:journals/jar/AlbertAGP11,cav/CookPR06,conf/pldi/GulwaniZ10,conf/popl/GuptaHMRX08,sas/HarrisLNR10,lics/PodelskiR04,hybrid/PodelskiW07})
and to compare the performance of our---theoretically
motivated---synthesis method in comparison with the
existing---heuristically motivated---approach used presently in the module.

\bibliographystyle{abbrv}
\bibliography{main}

\end{document}